\title{Sex as Gibbs Sampling: a probability model of evolution}
  \author{Chris Watkins \thanks{C.J.Watkins@rhul.ac.uk}}
 \author{Yvonne Buttkewitz \thanks{Yvonne.Buttkewitz@rhul.ac.uk}} 
    \affil {Department of Computer Science\\
    Royal Holloway, University of London\\
    Egham, Surrey TW20 0EX, UK}
\newcommand{\BlackBox}{\rule{1.5ex}{1.5ex}}  
\newenvironment{proof}{\par\noindent{\bf Proof\ }}{\hfill\BlackBox\\[2mm]}
\newtheorem{theorem}{Theorem}
\newtheorem{proposition}[theorem]{Proposition}
\newcommand{\breedprob}{p_B}
\newcommand{\fitness}{F}
\newcommand{\genomes}{\mathbf{G}}
\newcommand{\given}{\vert}
\newcommand{\poprv}{\mathrm{G}}
\newcommand{\stationary}{\pi}
\begin{document}

\maketitle

\begin{abstract}
We show that evolution can be modelled as fitting a probability model using standard Markov-chain Monte-Carlo (MCMC) sampling. With some care, `genetic algorithms' can be constructed that are reversible Markov chains that satisfy detailed balance; it follows that the stationary distribution of populations is a Gibbs distribution in a simple factorised form. For some standard and popular nonparametric probability models, we exhibit Gibbs-sampling procedures that are plausible genetic algorithms.  At mutation-selection equilibrium, a population of genomes is analogous to a sample from a Bayesian posterior, and the genomes are analogous to latent variables.  We suggest this is a general, tractable, and insightful formulation of evolutionary computation in terms of standard machine learning concepts and techniques.

In addition, we show that evolutionary processes in which selection acts by differences in fecundity are not reversible, and also that it is not possible to construct reversible evolutionary models in which each child is produced by only two parents. 
\end{abstract}

\section{Introduction: evolution as learning}

Evolution can be viewed as a learning algorithm, in that populations evolve improved solutions to the problem of how to survive and reproduce.  These improvements are the result of empirical experience in the sense that each organism is an experiment that is evaluated according to its lifetime reproductive success. The learning happens over many lifetimes. Can evolution be connected in a natural way to existing learning algorithms, or is it a learning process of a different type? We propose some  probability models of evolution that are analogous to Bayesian inference.  For these models, straightforward MCMC methods can also be viewed as evolutionary computation. These models are not toys: they are fully implementable as evolutionary computations for arbitrary fitness functions. We exhibit evolutionarily plausible sampling procedures, and we take care to show that the resulting Markov chains satisfy detailed balance, so that the mutation-selection equilibrium distribution over populations is precisely a Gibbs distribution. These sampling procedures are fully implementable for arbitrary fitness functions.  

In this paper we consider only evolution in sexual populations. This is partly because our approach can be applied most easily to sexual evolution, but also because it is sexual evolution that is most interesting for machine learning because it has produced the most complex organisms with the fewest lifetimes. Nearly all multicellular organisms have evolved sexually, as discussed by \cite{smith1989evolutionary}: all of the evolution of complex innate behaviour and of elaborate neural systems has been in sexual populations.

It is natural to model evolution as a Markov process. Suppose that evolution proceeds under constant conditions, so that there is a sequence of finite populations $\poprv^{1}, \poprv^{2}, \ldots , \poprv^{t}, \ldots$  Successive populations may have individuals in common: indeed, we will mostly consider algorithms in which at most one member of the population is replaced at each time-step. In genetics, these are known as `overlapping generations' models.  Each population $\poprv^{t+1}$ is produced by breeding and selection from the previous population $\poprv^{t}$. Since by assumption the environmental conditions that determine selection are constant, and breeding depends on the current population only, it follows that the sequence of populations is a Markov chain. 

In evolutionary computation it is typically convenient to require that each successive population has the same number $N$ of individuals; this ensures that the artificial population does not grow to unbounded size, and never goes extinct.  For any reasonable definition of mutation, there is a small but finite chance of any genome mutating into any other, so that any population can mutate into any other.  It follows that the Markov chain of populations is aperiodic and irreducible. One can construct models in which genomes vary in size, but in any reasonable such model there is some selective advantage in reducing genome size, so that the Markov chain will be positive recurrent. It follows that the sequence of populations will converge to a unique stationary distribution, which is also known as the mutation-selection equilibrium. 

One basic goal of evolutionary theory is to characterise this stationary distribution. For the standard models of evolution used in evolutionary computation and in population genetics, it is only possible to characterise the stationary distribution for special cases of the fitness function. The reason is that these algorithms generate  Markov chains that are slightly too complicated for easy analysis, and in particular they do not satisfy detailed balance, which is a property that can make the stationary distribution much easier to characterise. 

We present a class of simple models of evolution with Markov chains that satisfy detailed balance, and for which the stationary distribution factorises as: 
\begin{equation}
\label{eq:stationary}
\pi(\poprv) \propto p_B( \poprv ) \fitness( \poprv ) 
\end{equation}
where $\pi(\cdot)$ is the stationary distribution of populations;  $p_B(\cdot)$ is the probability of breeding the population with no selection; and $\fitness(G)$ is the fitness of the population.  $p_B(G)$  is analogous to the prior probability of the population.
A simple definition of $\fitness(\poprv)$ is as the product of the fitnesses of the individuals in the population; this is analogous to likelihood.   With this definition of population fitness, and assuming populations of $N$ individuals, equation\ \ref{eq:stationary} becomes
\begin{equation*}
p(\poprv) \propto p_B( \poprv ) \prod_{i=1}^{N}f(\poprv_i)
\end{equation*}
The form of this expression is similar to that for a Bayesian posterior distribution; indeed, in our models a population of genomes is closely analogous to a single sample of parameter values from a Bayesian posterior. 


\section{Exchangeable breeding} 
\label{sec:ExchangeableBreeding}

We model breeding using an exchangeable probability distribution $p_B(\cdot)$ over genomes. In our algorithms, a genome is never sampled \emph{ab initio}, but is always sampled conditionally given an existing population of genomes. Given a population $G$  comprising  $N$ genomes $(g_1, \ldots, g_N)$, we may: 
\begin{itemize}
\item Gibbs-sample a new $i$th genome $ g_i \sim p_B( \cdot \given G_{-i})$, where $G_{-i} = G \setminus g_i $, or
\item breed additional children $c_1, \ldots, c_M$ sequentially by $c_{i+1} \sim p_B(\cdot \given G, c_1, \ldots, c_i)$
\end{itemize}
The effect of exchangeability is to allow the interchange of ancestors and descendants, which allows construction of a reversible Markov chain of populations. The exchangeable breeding (EB) models we suggest are some of the most popular probability models in machine learning.

\subsection{The Dirichlet process as an exchangeable breeding model}
\label{sec:DirichletBreeding}

Suppose each individual consists of a single `locus' -- one atomic genetic element, which can mutate into different `alleles'. We temporarily write $g_i = \theta_i $, to emphasise that each genome consists of a single `allele', and so that we use the usual notation for Dirichlet processes.   The alleles are analogous to latent parameters.

Consider first breeding without any selection. That is, we suppose at first that all individuals are fit, so that all survive, and we need not consider the complications resulting from selection.   One generation of breeding is as follows. A randomly chosen genome $\theta_i$ `dies', and is replaced by another genome bred from the rest of the population.  The new allele bred -- denoted by $\tilde{\theta}$ is either an exact copy of one of the other alleles in the population, or else it is a `mutation'. For the moment we model mutation as generating a new, independent value, sampled from a `base distribution' $H$.  A mutation occurs with probability $u > 0 $. The breeding algorithm is: 
\begin{center}
\begin{minipage}{12cm}
\begin{algorithm}[H]
\SetAlgoCaptionLayout{centerline}
\DontPrintSemicolon 
\Repeat{forever}
{
{Randomly select an index $i$, $1 \le i \le N$}

{$\tilde{\theta} \sim (1-u) \textrm{uniform}( \theta_{-i} )  + u H$}\\
{$\theta_i \leftarrow \tilde{\theta}$}
 } 

\caption{Genetic algorithm with only one gene and no selection}
\label{alg:DirMutation}
\end{algorithm}
\end{minipage}
\end{center}

\noindent This is a variant of the Moran process \cite{moran1962statistical}, which is a well known model in population genetics. Depending on what aspects of genetic processes are being modelled, the alleles $\theta$ may limited to a small set of discrete values, an unbounded set of discrete values, or may be drawn from a continuous probability distribution. The Moran process is well known to be a reversible Markov chain \cite{ewens2004mathematical}.
In this simplest formulation, a mutant value is independent of the existing allele values. This a well-known simplifying assumption in population genetics, that was termed the `house of cards' model by \cite{kingman1978simple}.  

The impatient reader will already have noticed that algorithm\ \eqref{alg:DirMutation} is identical to the procedure for Gibbs sampling from the predictive distribution of a Dirichlet process according the the Blackwell-MacQueen urn model \cite{blackwell1973ferguson},:

\begin{center}
\begin{minipage}{12cm}
\begin{algorithm}[H]
\SetAlgoCaptionLayout{centerline}
\DontPrintSemicolon 
\Repeat{forever}
{
{Randomly select an index $i$, $1 \le i \le N$}\\
{$\tilde{\theta} \sim \frac{N-1}{N-1+\alpha}~ \textrm{uniform}( \theta_{-i} )  + \frac{\alpha}{N-1+\alpha}~ H$}\\
{$\theta_i \leftarrow \tilde{\theta}$}
 } 

\caption{Genetic algorithm with only one gene and no selection}
\label{alg:BlackwellMacQueen}
\end{algorithm}
\end{minipage}
\end{center}

\noindent The only difference between the `breeding' in algorithm\ \ref{alg:DirMutation} and the Gibbs sampling in algorithm\ \ref{alg:BlackwellMacQueen} is that the mutation rate $u$ is re-parameterised in terms of the concentration parameter $\alpha$, so that:
\begin{align}
\label{eq:uandalpha}
u & = \frac{\alpha}{N-1+\alpha}, \quad    \textrm{and equivalently} &    \alpha = (N-1) \frac{u}{1-u}
\end{align}
\noindent This similarity between the Moran process and the Blackwell-MacQueen urn process is hardly an accident -- applications in genetics have motivated many developments of the theory of exchangeable distributions on partitions and of Dirichlet processes: the famous Ewens sampling distribution \cite{ewens1972} is the joint distribution of the allele-counts in a Dirichlet process, and this was derived before Dirichlet processes were formulated by \cite{ferguson1973bayesian}.  The main aim of mathematical and statistical genetics was to develop methods of \emph{analysing} biological genetic data: our aim, in contrast, is to use the probability models as the basis for evolutionary algorithms.

\subsection{Exchangeable mutations} 

A limitation of Blackwell-MacQueen sampling  as a model of breeding  a new allele is that new mutations are independent of existing alleles. It is more reasonable to suppose that new mutations will typically be similar to some existing alleles.

The base distribution $H$ of the Dirichlet process can be an exchangeable distribution, so that new mutations depend on which mutations have been previously accepted.  A natural  distribution for mutations is the Dirichlet diffusion tree \cite{neal2003density}. While still not fully accurate, this more closely approximates the underlying biology.

\subsection{Product of DP models}
\label{sec:productDPmodel}

Now extend the previous model so that each genome is a vector of $L$ elements. A population of $N$ genomes is therefore represented as an $N$ by $L$ array of allele values, that we denote by $\genomes = (\theta_{ij})_{1\le i \le N, ~ 1\le j \le L}$. Each row of this array is a `genome'. Each column consists of the allele values at a particular locus on each of the genomes.  Breeding -- which is the same as Gibbs sampling -- is carried out by picking a row to resample, and then Gibbs sampling each element in that row. Each element of the newly bred genome is Gibbs-sampled from its own column distribution, independently of the other elements of the row. This breeding model is therefore equivalent to Gibbs-sampling corresponding elements from a product of $L$ Blackwell-MacQueen urn processes. 

\noindent This product-of-DPs breeding model is one example of an exchangeable breeding (EB) model that is similar to genetic algorithms. More elaborate EB models can be constructed to model genomes that are variable size sets of interacting elements, but these are beyond the scope of this paper.

\section{Modelling Selection} 

We set out four approaches to modelling natural selection: rejection sampling; Metropolis-Hastings; fitness as expected lifespan in continuous time; and tournaments between many children. All four methods satisfy detailed balance and lead to the same closed-form factorisation of the stationary distribution as proportional to breeding probability times fitness. 

None of these models of selection is strictly biologically accurate -- but all four methods allow differences in calculated fitness of an individual to affect its inclusion in the population, in a manner that is at least as plausible as the selection procedures used in evolutionary computation and in the simpler models of mathematical genetics. 

\subsection{Selection by rejection sampling} 
\label{sec:selectionrejection}

For any genome $g$, let the fitness $f(g) \in (0,1]$. Let the population be $G = (g_1, \ldots, g_N)$, and let $G_{-i}$ denote $(g_1, \ldots, g_{i-1}, g_{i+1}, \ldots, g_N)$. The rejection sampling algorithm is: 

\begin{center}
\begin{minipage}{12cm}
\begin{algorithm}[H]
\SetAlgoCaptionLayout{centerline}
\DontPrintSemicolon 
\Repeat{forever}
{
{randomly select an index $i$, $1 \le i \le N$}

\Repeat{acceptance}{
{propose $g_i^\prime \sim p_B( \cdot \given G_{-i})$}
 
{accept $g_i \leftarrow g_i^\prime $ with probability $f(g_i^\prime)$}
 } 
}

\caption{Selection by rejection sampling}
\label{alg:rejection}
\end{algorithm}
\end{minipage}
\end{center}

The following proposition is well known and part of MCMC folklore, but we sketch a proof here for completeness. 
\begin{proposition}
\label{prop:rejection}
The stationary distribution $\stationary$ of algorithm\ \ref{alg:rejection} is 
$\stationary(\poprv) =  \frac{1}{Z_\pi} \breedprob(\poprv) \prod_{i=1}^{N} f(G_i)$, where $Z_\pi$ is a normalising constant. 
\end{proposition}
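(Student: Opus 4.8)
The plan is to establish stationarity by verifying detailed balance, namely that $\pi(G)\,T(G\to G') = \pi(G')\,T(G'\to G)$ for the one-step transition kernel $T$ of Algorithm~\ref{alg:rejection}. Since the chain of populations has already been argued to be irreducible and aperiodic, detailed balance for the claimed $\pi$ identifies it as the unique stationary distribution, so no separate global-balance argument is needed.

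First I would work out the kernel $T$. A single step selects an index $i$ uniformly (probability $1/N$) and then runs the inner rejection loop, which proposes $g_i' \sim p_B(\cdot \mid G_{-i})$ and accepts with probability $f(g_i')$. Because $f(\cdot)\in(0,1]$ the acceptance probability is a genuine probability and is strictly positive, so the loop terminates almost surely; the standard rejection-sampling identity then gives that the accepted value has density
\[
q_i(\theta \mid G_{-i}) \;=\; \frac{p_B(\theta \mid G_{-i})\, f(\theta)}{Z_i(G_{-i})}, \qquad Z_i(G_{-i}) \;=\; \int p_B(\theta \mid G_{-i})\, f(\theta)\, d\theta .
\]
The crucial observation to carry forward is that the normaliser $Z_i(G_{-i})$ depends on the population only through $G_{-i}$.

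Next I would reduce to single-coordinate moves. Since each step changes at most the one resampled coordinate, a one-step transition between distinct $G$ and $G'$ is possible only when they agree off a single index $i$, i.e. $G=(G_{-i},g_i)$ and $G'=(G_{-i},g_i')$ with the same $G_{-i}$; and the only index whose resampling can effect this move is $i$ itself. Hence $T(G\to G') = \tfrac{1}{N}\,q_i(g_i' \mid G_{-i})$ and $T(G'\to G) = \tfrac{1}{N}\,q_i(g_i \mid G_{-i})$. Writing the claimed stationary law through the conditional factorisation $p_B(G) = p_B(g_i\mid G_{-i})\,p_B(G_{-i})$, I would then substitute both sides of the detailed-balance equation and check they coincide.

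The computation itself is short: both products reduce to
\[
\frac{1}{N\,Z_\pi\,Z_i(G_{-i})}\; p_B(G_{-i})\Big(\prod_{j\neq i} f(G_j)\Big)\, p_B(g_i\mid G_{-i})\,p_B(g_i'\mid G_{-i})\,f(g_i)\,f(g_i'),
\]
which is manifestly symmetric in $g_i \leftrightarrow g_i'$. The only place any care is needed — and what I would flag as the main point of the argument — is that the rejection-loop normaliser $Z_i(G_{-i})$ and the marginal $p_B(G_{-i})$ are identical on the two sides, precisely because the move holds $G_{-i}$ fixed; it is this cancellation that lets the accept-with-probability-$f$ rule contribute the factor $\prod_i f(G_i)$ to $\pi$ without leaving any residual dependence on the (generally intractable) normaliser. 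I would close by noting that the diagonal case $G=G'$ is trivial, and that the assumed exchangeability of $p_B$ guarantees the conditionals $p_B(\cdot\mid G_{-i})$ invoked by the algorithm are well defined and consistent across the choice of removed index.
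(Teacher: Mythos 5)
Your proof is correct and follows essentially the same route as the paper's: verify detailed balance for the claimed $\pi$ by writing the post-rejection transition density as $p_B(g_i' \mid G_{-i}) f(g_i')$ over a normaliser depending only on $G_{-i}$, factorising $p_B(G) = p_B(g_i \mid G_{-i})\,p_B(G_{-i})$, and observing the resulting expression is symmetric in $g_i \leftrightarrow g_i'$. If anything, your version is slightly more careful than the paper's (explicit $1/N$ factor, explicit rejection-sampling normaliser $Z_i(G_{-i})$, the reduction to single-coordinate moves, and the diagonal case), but the underlying argument is identical.
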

\begin{proof}
Consider a distribution $\pi$ over populations, such that $\stationary(\poprv) \propto \breedprob(\poprv) \prod_{i=1}^{N} f(G_i)$. We shall show that this distribution together with the state-transitions produced by algorithm \ref{alg:rejection} satisfies  detailed balance. From this, it follows that $\pi$ is the stationary distribution. 

Consider state $G = (g_1, g_2, \ldots, g_i, \ldots, g_N)$;  let genome $i$ be resampled, and consider the probability $p( G \rightarrow G^\prime )$of a state-change to $G^\prime = (g_1, g_2, \ldots, g_i^\prime, \ldots, g_N)$, given that the current state is $G$.   Then: 
\begin{align*} 
p( G \rightarrow G^\prime ) &=  \frac{1}{Z_T(G_{-i})} ~ p_B(G_i^\prime \given G_{-i}) f(G^\prime)\\
\intertext{so}
\pi(G) p( G \rightarrow G^\prime ) &= \frac{1}{Z_\pi} ~ p_B(G_i \given G_{-i}) p_B(G_{-i}) f(G)p( G \rightarrow G^\prime ) \\
& = \frac{1}{Z_\pi Z_T(G_{-i})} ~ p_B(G_i \given G_{-i}) p_B(G_{-i}) f(G)p_B(G_i^\prime \given G_{-i}) f(G^\prime)\\
&= \frac{1}{Z_\pi Z_T(G_{-i})} ~ p_B(G_i \given G_{-i})p_B(G_i^\prime \given G_{-i})~p_B(G_{-i}) ~ f(G) f(G^\prime)\\
\intertext{which is symmetric between $G$ and $G^\prime$, so}
&= \pi(G^\prime) p(G^\prime \rightarrow G)
\end{align*} 
Hence detailed balance holds for $\pi$, and it follows that $\pi$ is the stationary distribution.
\end{proof}

\noindent Rejection sampling is inefficient if the expected value of $f$ is small; a Gibbs-within-Metropolis algorithm  is more efficient, and  does not require $f$ to be bounded above:

\begin{center}
\begin{minipage}{12cm}
\begin{algorithm}[H]
\SetAlgoCaptionLayout{centerline}
\DontPrintSemicolon 

\Repeat{forever}{ 

{randomly select $i$ from $\{1, 2, \ldots, N\}$}

{propose $ g_i^\prime \sim p_B(\cdot \given G_{-i})$}

{accept $g_i \leftarrow g_i^\prime$ with probability $\min \biggl\{1, \dfrac{f(g^\prime)}{f(g)}\biggr\}$}
}

\caption{Gibbs-within-Metropolis}

\label{alg:GibbsMH}
\end{algorithm}
\end{minipage}
\end{center}

\noindent Algorithm\ \ref{alg:GibbsMH} has the same stationary distribution as algorithm\ \ref{alg:rejection}, but it does not describe a plausible breeding and selection model.  A child $g_i^\prime$ is bred from all individuals except for $g_i$,  and $g_i^\prime$ is then accepted if it wins a tournament with $g_i$, the only individual that is not its parent: this is not a reasonable biological story.  A more natural selection model is subset selection below.

\subsection{Subset selection} 
\label{sec:selectionSR}

\newcommand{\bfs}{\textbf{s}}
\newcommand{\bfr}{\textbf{r}}

Consider a breeding and selection cycle that starts with a population $G = (g_1, \ldots, g_N)$ . from which a set of $M$ new children  $C=(c_1, \ldots, c_M)$ are bred. This breeding is done in an exchangeable manner, in that children are bred one at a time, their parents being $G$ together with the preceding children, so that
 \begin{equation}
 \label{eq:sequentialBreeding}
 c_{i+1} \sim p_B(\cdot \given G, c_1, \ldots, c_i)
 \end{equation}
 From the combined population of size $N+M$,  a subset of size $N$ is  selected to form the new population $G^\prime$.  $G^\prime$ is sampled from a fitness-weighted distribution over subsets  of $G \cup C$.   Let $F$ be a fitness function that applies to \emph{sets} of genomes. One possible definition of $F(G)$ for a set $G$ of genomes is: 
\begin{equation*}
\label{eq:fS}
F( G ) = \prod_{g \in G} f(g)
\end{equation*}
but in general $F$ may be any non-negative function of genome-sets of any size. $F$ may depend only upon the fitnesses of individual genomes in the set, or it may express the joint fitness of a coalition of cooperators.

Let $U = G\cup C$. In the simplest case, suppose that one alternative subset $G^\prime \subset U$ is proposed and considered for acceptance. Let the probability of proposing $G^\prime$ be $q(G^\prime \given G, U)$, where $q$ is a proposal distribution such that for all $G$, $G^\prime$, 
$q(G\given G^\prime, U) = q( G^\prime \given G, U)$. A possible choice of $q$ is for $G^\prime$ to be a random subset of $U$.

\noindent The sampling algorithm is then: 
\begin{center}
\begin{minipage}{12cm}
\begin{algorithm}[H]
\SetAlgoCaptionLayout{centerline}
\DontPrintSemicolon 
\KwData{{$G$: starting population}\\
{$M$: number of children to breed}\\
{$K$: number of subset proposals}}
\Repeat{forever}{ 
\nlset{Breeding:}\For{$i \leftarrow 1$ \KwTo $M$}{Sample $c_{i}  \sim p_B(\cdot \given g_1, \ldots g_{N}, c_1, \ldots c_{i-1})$}
{Let $U= G \cup \{c_1, \ldots, c_M \}$}\\
\nlset{Selection:}\For{$K$ times}{
{Propose $G^\prime \sim q(\cdot \given G, U)$}\\
{Set $G \leftarrow G^\prime$ with probability $\min\biggl\{ 1, \dfrac{F(G^\prime)}{F(G)}\biggr\}$}
}
}
\caption{Subset selection}
\label{alg:SubsetSelection}
\end{algorithm}
\end{minipage}
\end{center}

\noindent There are a number of possible biological interpretations of this algorithm.  If $M=1$, this is a Moran process with selection, and $K$ members of the population are proposed for elimination; the least fit individual is most likely to be replaced. If $M$ is as large as $N$, the algorithm is somewhat similar to a Wright-Fisher process, but with two crucial differences: first, the children are bred sequentially according to equation\ \eqref{eq:sequentialBreeding}; second, there is stochastic partial overlap between successive populations.  $K$ coalitions of individuals are sequentially compared for acceptance; the fittest coalition is most likely to be accepted.

\smallskip

\noindent Algorithm\ \eqref{alg:SubsetSelection} satisfies detailed balance:

\begin{proof}
We take $K$ to be 1; multiple repetitions of the selection loop will leave the stationary distribution invariant. Let $\pi$ be defined as
\begin{equation*}
\pi(G) = \frac{1}{Z_\pi}  p_B(G) F(G) 
\end{equation*}
Consider the transition probability from $G$ to $G^\prime$ in two stages: the probability of breeding genomes $C$ to construct the population $U = G \cup C$; and the probability of transitioning from $G$ to $G^\prime$ within $U$. Let $C^\prime = U \setminus G^\prime$, so that $U = G \cup C = G^\prime \cup C^\prime$.  Then, assuming w.l.o.g. that $F(G^\prime) < F(G)$, we have:
\begin{align*}
p( G \rightarrow G, C ) & = p_B(C \given G) \\
p( G, C \rightarrow G^\prime, C^\prime ) & =  q(G^\prime \given G, U) \frac{F(G^\prime)}{F(G)}\\
p( G^\prime, C^\prime \rightarrow G, C ) & =  q(G \given G^\prime, U)
\end{align*}
By assumption, $q(G^\prime \given G, U) = q( G \given G^\prime, U)$.
It follows that 
\begin{align*}
\pi(G) p( G \rightarrow G^\prime, C^\prime ) & = p_B( G ) F(G)  p_B(C \given G ) q(G^\prime \given G, U) \frac{F(G^\prime)}{F(G)}\\
&= p_B(U) F(G^\prime) q(G^\prime \given G, U) & &\mbox{by exchangeability of $p_B$;}\\
&= p_B(G^\prime \cup C^\prime) F(G^\prime) q(G^\prime \given G, U)& &\mbox{since $G\cup C = G^\prime \cup C^\prime$;}\\
&= p_B(G^\prime) F(G^\prime) p_B(C^\prime \given G^\prime) q(G \given G^\prime, U)& &\mbox{by symmetry of $q$;}\\
&= \pi(G^\prime) p( G^\prime \rightarrow G, C )
\end{align*}
Hence $\pi$ is the stationary distribution.
\end{proof}

\subsection{Fitness as expected lifetime}
\label{sec:selectionlifetime}
Another approach to selection is by preferentially removing individuals of low fitness, so that the unfit `die' sooner, but all individuals contribute equally to reproduction whilst they are `alive'. Consider a Markov process in continuous time, and once again let $f$ be a strictly positive real-valued function of genomes.  For each individual $g_i$, let $\frac{1}{f(g_i)}$ be the hazard rate of mortality; that is, the probability that $g_i$ will `die' during the interval of length $dt$ is $\frac{dt}{f(g_i)}$.  When an individual dies, it is immediately replaced by another individual bred from the rest of the population.  We suppose that only one individual `dies' at a time, and its replacement is instantly bred from the remaining $N-1$ individuals. To avoid pathological situations in which there is an unbounded number of breeding events, we require that the expected lifetime of accepted individuals is greater than zero. 
 
\begin{proof}
Suppose as before that the population is $G= (g_1, \ldots, g_N)$. Define
\begin{equation*}
\pi(G) = \frac{1}{Z_\pi} p_B(G) \prod_{j=1}^N f(g_j)
\end{equation*}
and let  $h(G \rightarrow G^\prime)$ be the hazard rate of a transition to $G^\prime = (g_1, \ldots, g_i^\prime, \ldots, g_N)$   
\begin{equation*}
h( G \rightarrow G^\prime )  = \frac{1}{f(g_i)} p_B( g_i^\prime \given G_{-i} )
\end{equation*}
It follows that
\begin{align*}
\pi( G ) h( G \rightarrow G^\prime )& = \frac{1}{Z_\pi}p_B(G) \left( \prod_{j=1}^N f(g_j) \right)~\frac{1}{f(g_i)} p_B( g_i^\prime \given G_{-i} )\\
&= \frac{1}{Z_\pi} \quad  p_B(g_i\given G_{-i}) p_B(g_i^\prime\given G_{-i})~p_B(G_{-i})  \quad \prod_{j \neq i} f(g_j)
\end{align*}
which is symmetric between $g_i$ and $g_i^\prime$, showing that detailed balance holds. 
\end{proof}

\subsection{Tournaments between children only}
\label{sec:TournamentsBetweenChildren}
 It would be equally biologically plausible to arrange a tournament between children only. When an existing member $g_i$  of the population dies,  $K$ children are generated, and the winner of a tournament between the children is selected to replace $g_i$. Let the children be $c_1, \ldots, c_K \sim p_B(\cdot\given G_{-i})$, with fitnesses $\tilde{f}_1,\ldots,\tilde{f}_K$. Note that in this case the children are not bred exchangeably as in subset selection -- they are all breed only from $G_{-i}$. Let the probability that child $c_k$ wins the tournament be $p_W(c_k)$: 
\begin{equation*}
p_W(c_k) = \frac{f(c_k)}{\sum_{j=1}^K f(c_j)}
\end{equation*}
For large enough  $K$, this method of selecting the winning child will become close to Gibbs-sampling $g_i$ from a distribution proportional to $p_B(\cdot\given G_{-i} ) f(\cdot)$ ; that is, a tournament between sufficiently many children becomes arbitrarily close to Gibbs-sampling $g_i$ from $\pi(\cdot\given G_{-i})$. However, for small values of $K$, a selection tournament between $K$ children is only an approximation to Gibbs sampling, as noted by \cite{neal2000markov} with reference to approximate Gibbs sampling in Dirichlet mixture models. 

\subsection{Multiple fitness functions}

In all of the above algorithms, we may allow the fitness function to depend upon the index of the genome in the population; that is, we may specify the fitness of a population as:
\begin{equation*}
f(G) = \prod_{i=1}^L f_i(g_i)
\end{equation*}
where $f_i$ is a fitness function for the $i$th `niche' in the population.  One fundamental difference between sexual and asexual populations is that in a sexual population, different organisms -- and indeed different sub-populations -- may experience different selective pressures, and yet their genomes are recombined, so that the multiple selective pressures affect the entire population.  This provides an additional level of information flow from the environment to the genomes. A plate diagram of the resulting probability model is in figure \ref{fig:plateDiagramWithFitness} below. 

%
%
%
%

\usetikzlibrary{shapes}
\usetikzlibrary{fit}
\usetikzlibrary{chains}
\usetikzlibrary{arrows}

\tikzstyle{latent} = [circle,fill=white,draw=black,inner sep=1pt,
minimum size=20pt, font=\fontsize{10}{10}\selectfont, node distance=1]
\tikzstyle{obs} = [latent,fill=gray!25]
\tikzstyle{const} = [rectangle, inner sep=0pt, node distance=1]
\tikzstyle{factor} = [rectangle, fill=black,minimum size=5pt, inner
sep=0pt, node distance=0.4]
\tikzstyle{det} = [latent, diamond]

\tikzstyle{plate} = [draw, rectangle, rounded corners, fit=#1]
\tikzstyle{wrap} = [inner sep=0pt, fit=#1]
\tikzstyle{gate} = [draw, rectangle, dashed, fit=#1]

\tikzstyle{caption} = [font=\footnotesize, node distance=0] %
\tikzstyle{plate caption} = [caption, node distance=0, inner sep=0pt,
below left=5pt and 0pt of #1.south east] %
\tikzstyle{factor caption} = [caption] %
\tikzstyle{every label} += [caption] %

\tikzset{>={triangle 45}}


\newcommand{\factoredge}[4][]{ %
  \foreach \f in {#3} { %
    \foreach \x in {#2} { %
      \draw[-,#1] (\x) edge[-] (\f) ; %
    } ;
    \foreach \y in {#4} { %
      \draw[->,#1] (\f) -- (\y) ; %
    } ;
  } ;
}

\newcommand{\edge}[3][]{ %
  \foreach \x in {#2} { %
    \foreach \y in {#3} { %
      \draw[->,#1] (\x) -- (\y) ;%
    } ;
  } ;
}

\newcommand{\factor}[5][]{ %
  \node[factor, label={[name=#2-caption]#3}, name=#2, #1,
  alias=#2-alias] {} ; %
  \factoredge {#4} {#2-alias} {#5} ; %
}

\newcommand{\plate}[4][]{ %
  \node[wrap=#3] (#2-wrap) {}; %
  \node[plate caption=#2-wrap] (#2-caption) {#4}; %
  \node[plate=(#2-wrap)(#2-caption), #1] (#2) {}; %
}

\newcommand{\gate}[4][]{ %
  \node[gate=#3, name=#2, #1, alias=#2-alias] {}; %
  \foreach \x in {#4} { %
    \draw [-*,thick] (\x) -- (#2-alias); %
  } ;%
}

\newcommand{\vgate}[6]{ %
  \node[wrap=#2] (#1-left) {}; %
  \node[wrap=#4] (#1-right) {}; %
  \node[gate=(#1-left)(#1-right)] (#1) {}; %
  \node[caption, below left=of #1.north ] (#1-left-caption)
  {#3}; %
  \node[caption, below right=of #1.north ] (#1-right-caption)
  {#5}; %
  \draw [-, dashed] (#1.north) -- (#1.south); %
  \foreach \x in {#6} { %
    \draw [-*,thick] (\x) -- (#1); %
  } ;%
}

\newcommand{\hgate}[6]{ %
  \node[wrap=#2] (#1-top) {}; %
  \node[wrap=#4] (#1-bottom) {}; %
  \node[gate=(#1-top)(#1-bottom)] (#1) {}; %
  \node[caption, above right=of #1.west ] (#1-top-caption)
  {#3}; %
  \node[caption, below right=of #1.west ] (#1-bottom-caption)
  {#5}; %
  \draw [-, dashed] (#1.west) -- (#1.east); %
  \foreach \x in {#6} { %
    \draw [-*,thick] (\x) -- (#1); %
  } ;%
}

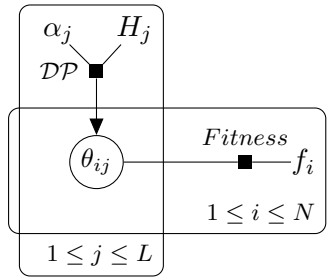
\begin{figure}[ht]
   \begin{center}
\begin{tikzpicture}

\node[latent](theta){$\theta_{ij}$};

\node[const, above=1.2 of theta, xshift=-0.5cm] (alpha) {$\alpha_j$};
\node[const, above=1.2 of theta, xshift=0.5cm] (H) {$H_j$};

\node[const, right=2.2 of theta](f){$f_i$};

\factor[above=0.75 of theta]{theta-f}{left:$\mathcal{DP}$}{alpha,H}{theta};
\factor[right=1.5 of theta]{f-f}{above:${Fitness}$}{f,theta}{};

\node[const, below=0.5 of theta](dummy){$~$};

\plate{loci}{
(theta)(theta-f)(theta-f-caption)
(alpha)
(H)
(dummy)
}{$1 \le j \le L$};

\plate{pop}{
(theta)(f-f)(f-f-caption)
(f)
(loci.west)(loci.east)
}{$1 \le i \le N$};
\end{tikzpicture}
\caption{Plate diagram of the product-of-Dirichlet processes evolutionary model with selection.  Note that each indexed `niche' in the population may have its own fitness function.}
\label{fig:plateDiagramWithFitness}
  \end{center}
\end{figure}

\section{Difficulties}

\subsection{Is it important to model breeding from only two parents?} 

The most obvious difference between the EB models and biological reproduction is that real organisms have only two rather than $N$ parents.  We believe that it is not possible to construct an exact exchangeable breeding model, nor any plausible reversible evolutionary Markov chain, using reproduction from two parents only. 

For a reversible Markov chain in its stationary distribution, it is not possible to tell the direction of time by observing the sequence of states, because any loop of states is equally likely to be traversed in both directions. This is simply the Kolmogorov cycle condition, and it is a basic property of reversible Markov chains.
However, for two-parent reproduction, in interesting parameter regimes, it is possible to identify the direction of time, even at mutation-selection equilibrium. This is because a child will typically be more similar to each of its parents than the parents are similar to each other. 

Consider a population $G$, and suppose two individuals, $a$ and $b$,  are chosen to breed, and they produce a child $c$ that (typically) replaces a distantly related genome $d$, which dies.  After replacement of $d$ by $c$, the population is $G^\prime$. Is it possible to tell whether the transition was from $G$ to $G^\prime$, or from $G^\prime$ to $G$? 

There are two cases. 
\begin{enumerate}
\item If  the parents of $d$ -- let us call them $x$ and $y$ -- are still in the population, then $x$, $y$, $a$ and $b$ are all in the population, and the difference between $G$ and $G^\prime$ is that in $G$ there is a child of $x$ and $y$, and in $G^\prime$ there is instead a child of $a$ and $b$. We cannot tell whether $d$ replaced $c$, or whether $c$ replaced $d$.

\item If one or both of the parents of $d$ have been  eliminated from the population before $c$ is bred from $a$ and $b$, then in $G$ there will be a genome $d$ without two parents, but in $G^\prime$ there will instead be a genome $c$ which has two parents.  We can tell the direction of time. 
\end{enumerate}
It follows that the Markov chain of populations is not reversible if we are able to determine whether or not a genome $c$ does have two parents in the population. We sketch a proof that for sufficiently long binary genomes, under any system of recombination, it is possible to determine whether a pair of genomes $a$ and $b$ are in fact the parents of $c$. 

If $a$ and $b$ are the parents of $c$, then every element of $c$ is a copy of either an element of $a$ or an element of $b$, except for a fraction $u$ of elements that are new mutations. Now consider the set of indices $J$ such that $a_j = b_j$. If the genomes are sufficiently long, $|J|$ will with high probability be large. Let $\hat{u}$ be the fraction of elements of $J$ at which $c_j \neq a_j$. Provided that the genomes are long enough, with high probability $\hat{u}$ will be close to $u$. Furthermore, this will only be the case if $a$ and $b$ are indeed parents of $c$; for any other pair of relatives of $c$, $\hat{u}$ will with high probability be larger than $u$, since there will have been additional copying operations and mutations on any path on the genealogical trees relating each element of $c$ to the corresponding elements in $a$ and $b$. It follows that, for any given population size, provided that $u$ is sufficiently small, and the genomes are sufficiently long, each pair of genomes in the population can be reliably tested to see if they are the parents of $c$. It follows that the direction of time is detectable because we can determine when the parents of genomes in the population disappear. 

Is breeding from only two parents important for evolution? Many researchers have used $N$-parental reproduction in various forms. and where comparisons of $N$- and bi-parental reproduction have been made, such as in \cite{baluja1995removing}, the tentative conclusion has been that $N$-parental reproduction appeared superior for the problems studied.  

On the other hand, it may be an imperfection of biology that each individual has only two parents -- perhaps evolution would be slightly more efficient if we had many parents?  It is hard to imagine a biologically feasible form of sexual reproduction in which the genetic material from multiple parents would be combined in a reliably unbiased way. Is it possible that biological sexual reproduction with only two parents is a sub-optimal implementation of Gibbs sampling?

\subsection{Selection by relative fecundity}
\label{sec:fecundityselection}

Another possible mode of selection is by relative fecundity: individuals have equal life expectancy, but while they are alive they contribute to breeding at a rate proportional to their fitness.     Suppose that $N$ individuals in the population have (positive) fecundity values $w_1, \ldots, w_N$, and suppose that the $i$th individual dies and is to be replaced with $g_i^\prime$, which is `bred' from $G_{-i}$. Let
\begin{equation*}
g_i^\prime \sim p_W \cdot \given G_{-i}, w_{-i} )
\end{equation*} 
where $p_W( \cdot \given G_{-i}, w_{-i})$ is a probability distribution over genomes that depends on the genomes $G_{-i}$ and their fecundities only. 

One possible breeding algorithm is  similar to that considered by \cite{dayan1997using}: each genome contributes reproductively in proportion to its fitness; that is the $j$th element of the Gibbs proposal to replace the $i$th genome is sampled from 
\begin{equation}
g_{ij}^\prime \sim \frac{1}{\alpha + \sum_{k \neq i} w(g_{k:})} \sum_{k\neq i} w(g_{k:})\delta_{g_{kj}}  \quad + \quad \alpha H
\end{equation} 
Appendix A gives a proof by counterexample that the Markov chain for this breeding method need not satisfy the Kolmogorov cycle condition, and so is not in general reversible. 

\subsection{Mutation rate and population size} 

For a fixed population size $N$, in an EBM such as the DP-product model of section \ref{sec:productDPmodel}, the concentration parameter is functionally related to the mutation rate.  In models where population size is variable, the mutation rate will vary if the concentration parameter is held constant as the size of the population changes. 
Natural populations do typically vary greatly in size over time with approximately constant mutation rate, so that the effective value of $\alpha$ increases with population size.  This is a difference between typical natural evolution and these models.

\section{Relationship to other work}

Formal models of evolution have been studied for over a century by researchers in many fields, and there are vast literatures on models of evolution in the fields of mathematical genetics, evolutionary computation, physics,  probability theory and Markov-chain Monte-Carlo (MCMC) methods, and recently a growing literature in computational learning theory. 

Our models are more concrete than those considered for the deep limitative results initiated by \cite{valiant2009evolvability} and \cite{feldman2009robustness}.

The notion of evolutionary computation was introduced in the 1960s by \cite{owens1966artificial} and others; the most influential model was the genetic algorithm introduced by \cite{holland1975adaptation}, which simulated sexual evolution with crossover and recombination between genomes represented as linear arrays.  Variants of genetic algorithms have been widely applied in optimisation problems, and there is a vast literature on this; two notable books are \cite{goldberg1989genetic} and  \cite{reeves2003genetic}. Although Holland's genetic algorithm is a greatly simplified abstraction of biological sexual reproduction, it is still too complicated for convenient analysis, so that its stationary distribution cannot in general be given in a useful closed form, although some progress was made by \cite{vose1999simple}.  

\cite{liu2008monte} and \cite{strens2003evolutionary} proposed reversible recombination operators for GAs, but these were biologically implausible, since two parents were reversibly recombined into two children, and both children had to be accepted.

More recently in evolutionary computation attention has focused on population based algorithms, in which breeding is either from the whole population or from parametric estimates of the population distribution. An early study was \cite{baluja1995removing}; `estimation of distribution algorithms' (EDAs) have been developed by \cite{zhang1999bayesian},\ \cite{pelikan2002survey},  and  others. These are effective optimisation methods using Bayesian heuristics, but the Markov chains of populations they induce are not typically reversible, though they can be made reversible with subset selection. \cite{shapiro2005drift} described a different approach to constructing reversible EDAs by discretising the parameter space and defining explicit transition probabilities. 

The probability models of section\ \ref{sec:ExchangeableBreeding} are taken from  mathematical population genetics and statistics. A recent review of Dirichlet processes is \cite{TehDP2010}.  One of the original motivations for Dirichlet processes was for the analysis of empirical genetic data, but our motivation is the reverse -- we use the models to define sampling algorithms that can be interpreted as evolutionary computation. Gibbs sampling was introduced by \cite{geman:1984}. 

In physics there is a considerable literature on applying the methods of statistical mechanics to evolution. \cite{sella2005application} consider the restricted case of a regime in which only one mutation is under selection at a time. \cite{ao2008emerging}, \cite{mustonen2010fitness}, and \cite{barton2009statistical} all note that diffusion approximations of the type developed by \cite{crow1970introduction} can be reversible and that equilibrium distributions may be computed by statistical mechanical methods. \cite{de2011contribution} is a recent review of evolutionary models developed in physics, and in other research communities. Our work differs in that we propose implementable sampling algorithms for general fitness functions, for finite populations, which satisfy detailed balance exactly. 

Finally, we note that our algorithm\ \eqref{alg:GibbsMH} above is a sampling method that is essentially identical to the algorithm\ 5 in \cite{neal2000markov}. Neal remarks that it is  a particularly inefficient and naive method of Gibbs sampling for fitting Dirichlet process mixture models: this raises the interesting question of how such an ineffective technique as Gibbs sampling could possibly explain the evolution of complex organisms. 

\section{Conclusion}

Choosing an appropriately simple model is important. To explain the robust success of evolution in developing so many different complex organisms, under so many different conditions of breeding and selection, it seems plausible that there may be some general underlying mathematical principle that gives evolution its learning power.  To find out what that principle may be, it seems reasonable to propose the simplest possible models that may capture what is essential to evolution's computational success, and which leave out complicating details that may be accidents of the biological implementation. Exchangeable breeding models (EBMs)   are one possibility.

As far as we are aware, EBMs are novel in that they are probability models of well known types, for which valid MCMC sampling methods can be interpreted as evolutionary computations.  The stationary distributions of these models can be given in closed form for arbitrary fitness functions. The models are, of course, not fully novel in that similar models were originally developed in mathematical genetics; our contribution is simply to point out that these models can be extended and used for evolutionary computation instead of genetic analysis. 

A number of  questions suggest themselves. Does EB capture the evident learning ability of sexual reproduction, or does bi-parental reproduction, with recombination of linear chromosomes, offer some essential advantage? 
What are appropriate exchangeable breeding distributions for complex machines: in particular, for collections of interacting genes? 
Are there `super-evolutionary' MCMC or variational algorithms that would converge to the stationary distribution faster than simple Gibbs-within-Metropolis sampling?

In the EBM models, sexual evolution is modelled as Gibbs sampling with an additional selection mechanism that preserves detailed balance. There is now extensive experience in machine learning of using MCMC methods, and this experience has been that Gibbs sampling is typically slow for high-dimensional distributions.   An obvious question is whether Gibbs sampling becomes `fast' for high dimensional evolutionary models.  Alternatively, are there completely different phenomena -- such as adaptive ecological competition between species -- that are needed to explain the effectiveness of evolution  of complex organisms?

\section*{Acknowledgements}
We would like to thank Prof Byoung-Tak Zhang, Sang-Woo Lee, and Prof Bert Kappen for helpful and stimulating conversations. This work was partly carried out while Chris Watkins was visiting the Biointelligence Group at Seoul National University, with the support of the MSIP (Ministry of Science, ICT \& future Planning), Republic of Korea.

\bibliographystyle{plain}
\bibliography{superevolution}

\section*{Appendix A}

\newcommand{\twobytwo}[4]{
\begin{array}{cc}
#1  & #2    \\
#3  & #4     
\end{array}
}

\newcommand{\onebytwo}[2]{
\begin{array}{cc}
#1 & #2
\end{array}
}

\newcommand{\tbtA}{\twobytwo{0}{0}{1}{0}}
\newcommand{\tbtB}{\twobytwo{0}{1}{1}{0}}
\newcommand{\tbtC}{\twobytwo{0}{1}{1}{1}}
\newcommand{\tbtD}{\twobytwo{0}{0}{1}{1}}

\newcommand{\gzero}{\onebytwo{0}{0}}
\newcommand{\gone}{\onebytwo{0}{1}}
\newcommand{\gtwo}{\onebytwo{1}{0}}
\newcommand{\gthree}{\onebytwo{1}{1}}

\newcommand{\tbtAtop}{\twobytwo{\cdot}{\cdot}{1}{0}}
\newcommand{\tbtBtop}{\twobytwo{\cdot}{\cdot}{1}{0}}
\newcommand{\tbtCtop}{\twobytwo{\cdot}{\cdot}{1}{1}}
\newcommand{\tbtDtop}{\twobytwo{\cdot}{\cdot}{1}{1}}

\newcommand{\tbtAbottom}{\twobytwo{0}{0}{\cdot}{\cdot}}
\newcommand{\tbtBbottom}{\twobytwo{0}{1}{\cdot}{\cdot}}
\newcommand{\tbtCbottom}{\twobytwo{0}{1}{\cdot}{\cdot}}
\newcommand{\tbtDbottom}{\twobytwo{0}{0}{\cdot}{\cdot}}

\newcommand{\pcond}[2]{P\Bigl( #1 \Bigm \vert #2 \Bigr)}

We prove that the sampling method of section\ \ref{sec:fecundityselection} is not reversible, by presenting a specific counterexample that violates the Kolmogorov cycle condition. Consider a probability model with population size 2, each genome having two loci (elements), and suppose that the possible allele values are 0 or 1. There are four possible genomes $ 00, 01, 10$ and $11$, with fecundities are $w_{00}, w_{01}, w_{10}, w_{11}$ respectively. Let the concentration parameters be $\alpha_0, \alpha_1$, and let $\alpha = \alpha_0 + \alpha_1$.  

The transition probabilities are as follows. A population in which the first genome is $\onebytwo{0}{0}$ and the second genome is $\onebytwo{1}{1}$ is denoted
\begin{equation*}
G = \tbtCtop
\end{equation*}
Suppose that the first genome is Gibbs-sampled. The probability of obtaining $\onebytwo{0}{1}$ is: 
\begin{equation*}
\pcond{\onebytwo{0}{1} }{\tbtC} = \frac{\alpha_0}{w_{11}+\alpha} \frac{w_{11} +\alpha_1}{w_{11}+\alpha}
\end{equation*}
where $\frac{\alpha_0}{w_{11}+\alpha} $ is the probability of sampling a $0$ as the first element, and $\frac{w_{11} +\alpha_1}{w_{11}+\alpha}$ is the probability of sampling a $1$ as the second element.

We consider a specific cycle of four states, shown below, and we compute the product of transition probabilities in clockwise and anti-clockwise directions around the cycle: these probabilities are set out in tables \ref{table:clockwise} and \ref{table:anticlockwise}.

\begin{center}
\begin{tabular}{cc}
\hline
Clockwise cycle & Anti-clockwise cycle\\
$
\begin{CD}
\tbtA @>>>  \tbtB \\
@AAA       @VVV \\
\tbtD @<<< \tbtC
\end{CD}
$
&
$
\begin{CD}
\tbtA @<<<  \tbtB \\
@VVV       @AAA\\
\tbtD @>>> \tbtC
\end{CD}
$\\
\hline
\end{tabular}
\end{center}

\begin{table}[h]
\begin{center}
\begin{tabular}{cl}
Clockwise states& transition probabilities\\
$\tbtA$&\\
$\downarrow$&$\pcond{\gone}{\tbtBtop}= \dfrac{ \alpha_0 \alpha_1} {(w_{10}+\alpha)^2}$\\
$\tbtB$&\\
$\downarrow$&$\pcond{\gthree}{\tbtCbottom}= \dfrac{ \alpha_1 (w_{01} + \alpha_1)} {(w_{01}+\alpha)^2}$\\
$\tbtC$&\\
$\downarrow$&$\pcond{\gzero}{\tbtDtop}= \dfrac{ \alpha_0^2} {(w_{11}+\alpha)^2}$\\
$\tbtD$&\\
$\downarrow$&$\pcond{\gtwo}{\tbtDbottom}= \dfrac{ \alpha_1 (w_{00} + \alpha_0)} {(w_{00}+\alpha)^2}$\\
$\tbtA$&\\
\end{tabular}
\caption{Transition probabilities of clockwise cycle. (We ignore the choice of which genome to resample: these probabilities could be made constant at $\frac{1}{2}$.)}
\label{table:clockwise}
\end{center}
\end{table}%

\begin{table}[h]
\begin{center}
\begin{tabular}{cl}
Anti-clockwise states & transition probabilities\\
$\tbtA$&\\
$\uparrow$&$\pcond{\gzero}{\tbtBtop}= \dfrac{ \alpha_0 (w_{10}+\alpha_1)} {(w_{10}+\alpha)^2}$\\
$\tbtB$&\\
$\uparrow$&$\pcond{\gtwo}{\tbtCbottom}= \dfrac{ \alpha_0 \alpha_1} {(w_{01}+\alpha)^2}$\\
$\tbtC$&\\
$\uparrow$&$\pcond{\gone}{\tbtDtop}= \dfrac{ \alpha_0 (w_{11} + \alpha_1)} {(w_{11}+\alpha)^2}$\\
$\tbtD$&\\
$\uparrow$&$\pcond{\gthree}{\tbtDbottom}= \dfrac{ \alpha_1^2} {(w_{00}+\alpha)^2}$\\
$\tbtA$&\\
\end{tabular}
\caption{Transition probabilities of anti-clockwise cycle. }
\label{table:anticlockwise}
\end{center}
\end{table}%

The ratio of the product of the clockwise transition probabilities to the product of the anti-clockwise probabilities is: 
\begin{equation*}
\frac{ (w_{01} + \alpha_1)(w_{00} + \alpha_0) }{(w_{10} + \alpha_1) ( w_{11} + \alpha_1)} 
\end{equation*}
Since $w_{00}, w_{01}, w_{10},$ and $w_{11}$ are arbitrary fitness values, this ratio is not in general equal to 1, and the Kolmogorov cycle condition does not hold.

\end{document}